\documentclass[11pt]{article}
\usepackage[a4paper,margin=1in]{geometry}
\usepackage{amsmath,amssymb,amsthm}
\usepackage{hyperref}
\usepackage{authblk}
\usepackage[T1]{fontenc}
\usepackage[utf8]{inputenc}

\hypersetup{colorlinks=true, linkcolor=black, citecolor=blue, urlcolor=blue}

\newtheorem{theorem}{Theorem}
\newtheorem{remark}{Remark}

\newtheorem{lemma}{Lemma}

\title{\textbf{Intrinsic Heisenberg Lower Bounds on Schwarzschild and Weyl-Class Spacelike Slices}}

\author{Thomas Sch\"urmann\thanks{Electronic address: \href{mailto:t.schurmann@icloud.com}{\texttt{t.schurmann@icloud.com}}}}

\affil{D\"usseldorf, Germany}
\date{}

\begin{document}
\maketitle

\begin{abstract}
	We establish a coordinate-invariant Heisenberg-type lower bound for quantum states strictly localized in geodesic balls of radius $r_g$ on horizon-regular spacelike slices of static, spherically symmetric, asymptotically flat (AF) black-holes. Via a variance-eigenvalue equivalence the momentum uncertainty reduces to the first Dirichlet eigenvalue of the Laplace-Beltrami operator, yielding a slice-uniform Hardy baseline \(\sigma_p r_g \ge \hbar/2\) under mild convexity assumptions on the balls; the bound is never attained and admits a positive gap both on compact interior regions and uniformly far out. For the Schwarzschild Painlev\'e-Gullstrand (PG) slice, whose induced 3-geometry is Euclidean, one recovers the exact Euclidean scale \(\sigma_p r_g \ge \pi\hbar\), which is optimal among all admissible slices. The entire construction extends across the black-hole horizon, and it transfers to the static axisymmetric Weyl class, where the Hardy floor, strict gap, and AF \(\pi\)-scale persist (a global PG-like optimum need not exist).
\end{abstract}

\vspace{2ex}
\noindent\textbf{Keywords:} Heisenberg uncertainty principle; geodesic balls; Dirichlet Laplacian; Hardy inequality; spectral geometry; asymptotically flat spacetimes; Schwarzschild spacetime; static axisymmetric Weyl metrics

\vspace{2ex}
\noindent\textbf{PACS:} 04.70.-s; 04.20.-q; 02.40.Ky; 03.65.Ta; 02.30.Jr; 03.65.Db

\section{Introduction}

The uncertainty principle on curved spaces can be expressed in a coordinate-invariant way by localizing a quantum state on a geodesic ball of a Riemannian three-manifold and comparing the momentum variance to the first Dirichlet eigenvalue of the Laplace-Beltrami operator. This intrinsic viewpoint produces sharp lower bounds whenever the relevant spectra are known explicitly (see the classical spectral-geometry background by Yau~\cite{Yau1975}, Reilly~\cite{Reilly1977}, and Chavel~\cite{Chavel1984}, together with subsequent refinements by El~Soufi and Ilias~\cite{ElSoufiIlias2003} and Ling~\cite{Ling2004}). In \cite{Schuermann2018} a physically motivated coordinate-free uncertainty principle on three-manifolds of constant curvature was formulated and sharp lower bounds were obtained from closed-form eigenvalues on geodesic balls. In the present black-hole context, different spacelike foliations of the static, spherically symmetric, asymptotically flat (AF) exterior induce different three-metrics; requiring horizon regularity is understood in the sense of Martel and Poisson~\cite{MartelPoisson2001}. Related work by Petruzziello and Wagner~\cite{PetruzzielloWagner2021} adopts geodesic-ball localizations on fixed curved backgrounds and defines momentum and its dispersion in the nonrelativistic limit, deriving curvature-induced uncertainty relations. Paschalis~\cite{Paschalis2024} develops Hardy-type inequalities tailored to black-hole exteriors, recovering Euclidean behavior far out by asymptotic flatness.
Berchio, Ganguly, and Grillo~\cite{BerchioGangulyGrillo2017} established sharp Poincar\'e-Hardy and Poincar\'e-Rellich inequalities on hyperbolic space, including optimal constants and improved remainder terms, and extended their methods to wider classes of manifolds under curvature hypotheses. Their analysis shows in a particularly transparent way how negative curvature enhances Hardy-type lower bounds for the Laplace-Beltrami operator. Conceptually this resonates with our use of distance-to-boundary Hardy inequalities on geodesic balls to control momentum variance; however,~\cite{BerchioGangulyGrillo2017} works in a model geometry with globally constant curvature and does not address horizon regularity, asymptotic flatness, or any optimization over foliations of a black-hole exterior.
Krist\'aly~\cite{Kristaly2018} proved sharp Heisenberg-Pauli-Weyl uncertainty principles on complete Riemannian manifolds and exhibited strong rigidity phenomena tying the possibility of equality (or global sharpness) to curvature conditions. This furnishes a general curvature-sensitive framework for uncertainty principles. In contrast, our work fixes a specific localization mechanism (Dirichlet data on geodesic balls inside horizon-regular slices of a black-hole exterior) and derives pointwise, slice-uniform lower bounds together with strict non-attainment and quantitative gaps; we also analyze the AF regime and optimize the uncertainty scale over admissible slices.
Avkhadiev and Makarov~\cite{AvkhadievMakarov2019} obtained refined Hardy-type inequalities on domains with convex complement and drew explicit connections to Heisenberg-type uncertainty in exterior regions. Their Euclidean results make precise how boundary geometry in an outside problem enforces a spectral cost, echoing our philosophy that hard spatial confinement on a ball sets a minimal momentum scale. Our analysis is intrinsically Riemannian (geodesic balls in the induced three-geometry of a black-hole slice) and includes horizon regularity and asymptotic flatness; moreover, we quantify the best location-independent constant across the entire exterior and then optimize it over the foliation.
From a complementary perspective, Iosevich, Mayeli, and Wyman~\cite{IosevichMayeliWyman2024} developed Fourier-analytic uncertainty principles on Riemannian manifolds, with particular emphasis on compact settings where spectral decompositions by Laplace eigenfunctions replace the classical Euclidean Fourier basis. Their results underline, in a representation-theoretic language, the same localization-versus-frequency trade-off that in our setting is encoded by Dirichlet eigenvalues on geodesic balls. While~\cite{IosevichMayeliWyman2024} does not treat black-hole geometries or horizon issues, it provides a broad harmonic-analytic context supporting the intrinsic, coordinate-invariant stance taken here.

In what follows, we formulate and analyze a coordinate-invariant Heisenberg bound on the induced spacelike slices of the static, spherically symmetric, AF exterior by strictly localizing quantum states in geodesic balls. The variance-eigenvalue equivalence on a ball (Lemma~\ref{lem:variance-bound}) turns momentum uncertainty into a first Dirichlet eigenvalue problem and yields a local estimate that is then aggregated into a slice-uniform lower bound via a distance-to-boundary Hardy inequality on strongly geodesically convex balls (Theorem~\ref{thm:unconditional-hardy}). We prove that this Hardy floor is never attained and that there is a positive gap both on compact interior regions and uniformly far out in the AF zone (Lemma~\ref{lem:hardy-gap}; see also the two-sided far-out estimate~\eqref{eq:farout-2sided} and the limit~\eqref{eq:AF-limit}). Packaging the local information into the dimensionless quantity $\mathcal J_{\Theta}$ and the slice-wise constant $c(\Theta)$ (definitions~\eqref{eq:J} and~\eqref{eq:ctheta}) leads to a variational program together with the global optimization~\eqref{eq:cstar} over horizon-regular slices. Within this class, the Painlev\'e-Gullstrand foliation attains the exact Euclidean scale everywhere (Lemma~\ref{lem:PG-exact}), and the ensuing corollary identifies it as optimal for the slice-uniform uncertainty constant.
Finally, we extend the entire construction across the horizon and show that the same intrinsic bound holds inside the black hole.
Beyond spherical symmetry, we also transfer the framework to the static, axisymmetric Weyl class: on horizon-regular slices the Hardy baseline, the strict gap, and the AF $\pi$-scale persist, while a globally flat PG optimum need not exist outside the Schwarzschild subclass.
 
The paper is organized as follows: Section~\ref{sec:setting} fixes the geometric framework (class of admissible slices, induced metric, and geodesic balls) and records the variance-eigenvalue equivalence on a ball. Section~\ref{sec:variational-proof} sets up the variational program, compares first Dirichlet eigenvalues under slice deformations via the Rayleigh principle, and proves the unconditional Hardy baseline in Theorem~\ref{thm:unconditional-hardy}. Section~\ref{sec:interior-extension} extends the construction across the horizon and shows that the same intrinsic bound holds inside the black hole. Section~\ref{sec:weyl} extends the framework to the standard axisymmetric Weyl class, showing that the Hardy floor, the strict gap, and the AF $\pi$–scale persist on horizon–regular slices; Section~\ref{sec:conclusion-outlook} concludes with a summary and outlook.

\section{Setting}
\label{sec:setting}
Throughout we work in geometric units ${G=c=1}$; Planck's constant $\hbar$ is kept explicit.
Let $(\mathcal{M},g)$ be a four-dimensional, static, spherically symmetric, AF black hole exterior written in areal radius gauge,
\begin{equation}\label{eq:SSS-areal}
ds^{2} \;=\; -\,f(r)\,dt^{2} \; +\; f(r)^{-1}\,dr^{2} \; +\; r^{2}\,d\Omega^{2},
\end{equation}
where $f\in C^{2}((r_{+},\infty))$, $f(r)>0$ for $r>r_{+}$, $f(r_{+})=0$, and $f(r)\to 1$ as $r\to\infty$ with standard $C^{2}$ falloff. In the Schwarzschild case this corresponds to $f(r)=1-\tfrac{2M}{r}$ with horizon radius $r_{+}=2M$. Consider the class $\mathcal{S}$ of spherically symmetric, $C^{2}$ spacelike slices of the exterior region $\{r>r_{+}\}$ defined as level sets of
\begin{equation}
\tau(t,r)=t+\Theta(r),
\end{equation}
with $\Theta$ smooth and such that $\Sigma_{\Theta}=\{\tau=\mathrm{const}\}$ is spacelike and horizon-regular. The induced Riemannian metric $h_{\Theta}$ on $\Sigma_{\Theta}$ reads
\begin{equation}
h_{\Theta}=\mathrm{diag}\left(\frac{1}{f(r)}-f(r)\,(\Theta'(r))^{2},\ r^{2},\ r^{2}\sin^{2}\theta\right),
\end{equation}
in the coordinates $(r,\theta,\varphi)$. The slice is \emph{asymptotically flat} if, in asymptotically Cartesian coordinates, its induced metric satisfies
\(h_{ij}(x)=\delta_{ij}+O(r^{-1})\), \(\partial_k h_{ij}(x)=O(r^{-2})\), and \(\partial_\ell\partial_k h_{ij}(x)=O(r^{-3})\)
as \(r=|x|\to\infty\). These \(C^2\)-rates are sufficient for the norm/volume comparisons used below.

\medskip
\noindent\textbf{Horizon-regularity.} We call a slice $\Sigma_\Theta$ \emph{horizon-regular} if the induced radial coefficient
\[
h_\Theta(\partial_r,\partial_r)\ =\ \frac{1}{f(r)}\ -\ f(r)\,\big(\Theta'(r)\big)^2
\]
admits a smooth extension to $r=r_+$ with a positive, finite limit, so that $\Sigma_\Theta$ extends smoothly across the horizon in coordinates regular there (e.g., Eddington-Finkelstein or PG). Equivalently, $\tau=t+\Theta(r)$ is smooth on a neighborhood of $\{r\ge r_+\}$ and $h_\Theta$ has a $C^2$ extension to $r=r_+$. See \cite[Sections II-III]{MartelPoisson2001}.
\medskip

Given a point $x\in\Sigma_{\Theta}$ and a geodesic radius $r_{g}>0$, let $B_{r_{g}}(x)\subset\Sigma_{\Theta}$ be the geodesic ball of radius $r_{g}$ centered at $x$ with respect to $h_{\Theta}$. Denote by $\lambda_{1}(B_{r_{g}}(x);h_{\Theta})$ the first Dirichlet eigenvalue of the Laplace-Beltrami operator $-\Delta_{h_{\Theta}}$ on $B_{r_{g}}(x)$.

\medskip
\noindent
\textbf{Geometric regularity.}
For any $x \in \Sigma_{\Theta}$ and any $r_g>0$ small enough so that the geodesic ball $B_{r_g}(x)$ is well-defined (without self-intersections and with smooth boundary), $B_{r_g}(x)$ is a regular Dirichlet domain. This is automatic for points sufficiently far out in the exterior region. Because $h_\Theta$ is $C^2$-AF, there exist constants $R<\infty$ and $r_*>0$ so that for all $x$ with $r(x)\ge R$ we have $\operatorname{inj}_{h_\Theta}(x)\ge r_*\gg r_g$, and hence $B_{r_g}(x)$ is a regular Dirichlet domain (cf.~\cite[Section IV.1]{Chavel1984}).

\medskip
\noindent\textbf{Standing hypothesis (strong geodesic convexity).}
Throughout, when invoking the distance-to-boundary Hardy inequality on $B_{r_g}(x)$ we assume that the inward normal exponential map from $\partial B_{r_g}(x)$ is a diffeomorphism onto $B_{r_g}(x)$ (equivalently, the boundary normal injectivity radius at $\partial B_{r_g}(x)$ is at least $r_g$). This ensures that $B_{r_g}(x)$ admits boundary normal coordinates almost everywhere and that Lemma~\ref{lem:hardy-ball} applies.

\paragraph{Notation (Rayleigh quotient).}
For a Riemannian metric $g$ on a domain $\Omega$ we set
\begin{equation}\label{eq:def-rayleigh}
	R_g(u;\Omega):=\frac{\int_\Omega |\nabla u|_g^2\, dV_g}{\int_\Omega u^2\, dV_g},
	\qquad u\in H_0^1(\Omega)\setminus\{0\}.
\end{equation}
We will suppress $\Omega$ when clear from the context (in particular, $\Omega=B_{r_g}(x)$), writing simply $R_g(u)$.
The principal Dirichlet eigenvalue of $-\Delta_g$ satisfies the Rayleigh characterization
\begin{equation}\label{eq:rayleigh-char}
	\lambda_1(\Omega;g)=\inf_{u\in H_0^1(\Omega)\setminus\{0\}} R_g(u;\Omega).
\end{equation}

\subsection{The uncertainty lower bound}
Following \cite{Schuermann2018}, a strict localization of a quantum state $\psi$ in $B_{r_{g}}(x)$ is modeled by Dirichlet boundary conditions on $\partial B_{r_{g}}(x)$. The momentum uncertainty satisfies
\begin{equation}
\sigma_{p}^{2}\ :=\ \mathrm{Var}_{h_{\Theta}}(p)[\psi]\ \ge\ \hbar^{2}\,\lambda_{1}\big(B_{r_{g}}(x);h_{\Theta}\big).
\end{equation}
This equation  is the local variational content of our construction: imposing a hard Dirichlet wall on the geodesic ball $B_{r_g}(x)$ turns the momentum-uncertainty cost into a Dirichlet spectral problem for $-\Delta_{h_\Theta}$. In a local $h_\Theta$–orthonormal frame $\{e_a\}_{a=1}^3$ with canonical momenta $P_a:=-\,i\hbar\nabla_{e_a}$, the total variance $\mathrm{Var}_{h_\Theta}(p)[\psi]$ is controlled by the Rayleigh quotient on $B_{r_g}(x)$; minimizing over Dirichlet states produces exactly the first Dirichlet eigenvalue. The next lemma records this variance–eigenvalue equivalence and identifies all minimizers.

\begin{lemma}[Variance-Eigenvalue Equivalence on a Geodesic Ball]\label{lem:variance-bound}
	Let $(\Sigma,h)$ be a Riemannian three-manifold and let $B\subset\Sigma$ be a geodesic ball with Dirichlet boundary. 
	For $\psi\in H_0^1(B)$ normalized by $\|\psi\|_{L^2(B)}=1$, write the canonical momentum operators in a local $h$-orthonormal frame $\{e_a\}_{a=1}^3$ as $P_a:=-\,i\hbar\,\nabla_{e_a}$, so that $\sum_{a=1}^3 P_a^\dagger P_a=-\hbar^2\Delta_h$. Define the momentum variance by
	\footnote{For a related discussion of momentum operators on curved configuration spaces and their self-adjoint realizations, see \cite{Schuermann2025}.}	
	\begin{equation}\label{eq:var-def}
		\mathrm{Var}_h(p)[\psi] \;=\; \sum_{a=1}^3 \Big( \langle \psi, P_a^\dagger P_a \psi\rangle \;-\; \big| \langle \psi, P_a \psi\rangle \big|^2 \Big).
	\end{equation}
	Then
	\begin{equation}\label{eq:equivalence}
		\inf_{\substack{\psi\in H_0^1(B)\\ \|\psi\|_{L^2}=1}} \mathrm{Var}_h(p)[\psi] \;=\; \hbar^2\,\lambda_1(B;h),
	\end{equation}
	and the minimizers are precisely the first Dirichlet eigenfunctions of $-\Delta_h$ on $B$ multiplied by a constant phase; in flat (Euclidean) balls, the set of minimizers also includes factors $e^{i(c\cdot x+\alpha)}$ with constant vectors $c$ and $\alpha\in\mathbb{R}$. Consequently,
	\begin{equation}\label{eq:heisenberg}
		\sigma_p^2 \;=\; \mathrm{Var}_h(p)[\psi] \;\ge\; \hbar^2\,\lambda_1(B;h)
		\quad\text{and hence}\quad
		\sigma_p \;\ge\; \hbar\,\sqrt{\lambda_1(B;h)},
	\end{equation}
	with equality for the Dirichlet ground state.
\end{lemma}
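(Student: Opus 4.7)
The plan is to turn the variance-eigenvalue equivalence into a polar (Madelung) decomposition argument closed off by Cauchy-Schwarz and the Rayleigh principle, exactly as one would in the Euclidean prototype but now on the curved geodesic ball $B$. First I would use the frame-invariant identity $\sum_a P_a^\dagger P_a = -\hbar^2\Delta_h$ together with Green's formula on $H_0^1(B)$ to rewrite
\[
\mathrm{Var}_h(p)[\psi] \;=\; \hbar^2\!\int_B |\nabla\psi|_h^2\, dV_h \;-\; \sum_{a=1}^3 \big|\langle \psi, P_a\psi\rangle\big|^2,
\]
so that the additive contribution $\hbar^2\|\nabla\psi\|_{L^2(B)}^2$ is visibly the numerator of the Rayleigh quotient $R_{h_\Theta}(\psi;B)$ and only the subtracted expectation squares remain to be controlled.

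Next I would substitute the polar decomposition $\psi = R\,e^{i\theta}$ with $R = |\psi|$. A direct computation gives $|\nabla\psi|_h^2 = |\nabla R|_h^2 + R^2\,|\nabla\theta|_h^2$ and, under the Dirichlet boundary condition, $\langle\psi, P_a\psi\rangle = \hbar \int_B R^2\, (e_a\theta)\,dV_h$ (up to the frame-divergence issue noted below). A weighted Cauchy-Schwarz on $(B,R^2\,dV_h)$ with the normalization $\|R\|_{L^2}=1$ yields
\[
\sum_{a=1}^3 \Big(\!\int_B R^2\, (e_a\theta)\,dV_h\Big)^2 \;\le\; \int_B R^2\,|\nabla\theta|_h^2\,dV_h,
\]
and subtracting this from $\int_B|\nabla\psi|_h^2$ cancels the phase-gradient contribution cleanly, leaving
\[
\hbar^{-2}\,\mathrm{Var}_h(p)[\psi] \;\ge\; \int_B |\nabla R|_h^2\, dV_h \;\ge\; \lambda_1(B;h)
\]
by the Rayleigh characterization \eqref{eq:rayleigh-char} applied to $R \in H_0^1(B)$.

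For the equality analysis, tightness of the Rayleigh step forces $R = \phi_1$, the first Dirichlet eigenfunction (positive and unique up to sign on a connected ball), while tightness of the Cauchy-Schwarz step forces $e_a\theta$ to be $R^2$-a.e.\ constant, i.e., $\nabla\theta$ constant on $\mathrm{supp}\,R$. On a general curved ball this collapses $\theta$ to a pure constant, giving $\psi = e^{i\alpha}\phi_1$; in the flat case $\theta$ may be any affine function $c\cdot x+\alpha$, so the minimizer set picks up the plane-wave modulations $\phi_1\,e^{i(c\cdot x+\alpha)}$. Direct substitution then confirms that both families realize $\mathrm{Var}_h(p)[\psi] = \hbar^2\lambda_1(B;h)$.

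The main obstacle is the frame-dependence that enters via the ``real-part'' term $\int_B R\,(e_a R)\,dV_h = -\tfrac12\int_B R^2\,\mathrm{div}_h(e_a)\,dV_h$, which need not vanish outside the Euclidean Cartesian setting and would contaminate the expression for $\langle\psi, P_a\psi\rangle$ with an imaginary divergence contribution. The cleanest ways to neutralize it are either to pass to the symmetric realization $P_a \mapsto \tfrac12(P_a + P_a^\dagger)$ — which preserves $\sum_a P_a^\dagger P_a = -\hbar^2\Delta_h$ and leaves the variance \eqref{eq:var-def} unchanged in value — or to work in a local orthonormal frame that is normal at the center of $B$ and handle the global bookkeeping invariantly. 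Once this subtlety is organized, the Rayleigh plus Cauchy-Schwarz machinery is entirely standard.
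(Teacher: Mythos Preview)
Your argument is essentially identical to the paper's: polar decomposition $\psi=R\,e^{i\theta}$ (written $u\,e^{i\varphi}$ there), the Madelung identity $|\nabla\psi|_h^2=|\nabla R|_h^2+R^2|\nabla\theta|_h^2$, Cauchy--Schwarz against the probability measure $R^2\,dV_h$ to kill the phase contribution, and the Rayleigh principle for the amplitude, with the same equality discussion. If anything you are more careful than the paper, which passes directly to $\mathrm{Var}_h(p)[\psi]=\hbar^2\int_B|\nabla\psi|_h^2-\hbar^2\big|\int_B u^2\nabla\varphi\big|^2$ without flagging the frame-divergence contribution to $\langle\psi,P_a\psi\rangle$ that you isolate and propose to neutralize by symmetrizing $P_a$.
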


\begin{proof}
	Write $\psi = u\,e^{i\varphi}$ with $u:=|\psi|\in H_0^1(B)$ and $\varphi\in H^1_{\mathrm{loc}}(B)$ (the value of $\varphi$ on $\{u=0\}$ is irrelevant). 
	Here $P_a=-\,i\hbar\,\nabla_{e_a}$ is the canonical momentum in a local $h$-orthonormal frame. For scalar fields with Dirichlet boundary data one has the formal adjoint $\big(\nabla_{e_a}\big)^{\dagger}=-\nabla_{e_a}-\operatorname{div}_h(e_a)$, hence $P_a^{\dagger}=i\hbar\,\big(\nabla_{e_a}+\operatorname{div}_h(e_a)\big)$ and $\sum_{a=1}^3 P_a^{\dagger}P_a=-\hbar^2\Delta_h$ (our sign convention is that $-\Delta_h\ge 0$ on $H_0^1$). Using $\sum_{a=1}^3 P_a^\dagger P_a=-\hbar^2\Delta_h$ and integration by parts,
	\begin{equation}\label{eq:var-expand}
		\mathrm{Var}_h(p)[\psi] \;=\; \hbar^2\int_B |\nabla\psi|_h^2\,dV_h \;-\; \hbar^2 \left| \int_B u^2 \nabla\varphi \, dV_h \right|^2.
	\end{equation}
	The Madelung amplitude-phase decomposition \cite{Madelung1927} gives 
	\begin{equation}\label{eq:madelung}
		|\nabla\psi|_h^2 \;=\; |\nabla u|_h^2 \;+\; u^2\,|\nabla\varphi|_h^2.
	\end{equation}
	With the probability measure $d\mu := u^2\,dV_h$ (note $\int_B u^2\,dV_h=1$) the Cauchy-Schwarz inequality yields
	\begin{equation}\label{eq:weighted-CS}
		\int_B u^2\,|\nabla\varphi|_h^2\,dV_h \;\ge\; \left| \int_B u^2 \nabla\varphi \, dV_h \right|^2,
	\end{equation}
	with equality if and only if $\nabla\varphi$ is $\mu$-almost everywhere constant on $B$. Inserting \eqref{eq:madelung} into \eqref{eq:var-expand} and applying \eqref{eq:weighted-CS} we obtain
	\begin{equation}\label{eq:phase-min}
		\mathrm{Var}_h(p)[\psi] \;\ge\; \hbar^2 \int_B |\nabla u|_h^2\, dV_h.
	\end{equation}
	The inequality \eqref{eq:phase-min} is sharp for $\varphi\equiv \mathrm{const}$ (in Euclidean balls also for $\varphi(x)=c\cdot x+\alpha$ with constant $c$), so minimization over all $\psi$ reduces to minimization over amplitudes $u$ alone.\\
	\\
	By the Rayleigh-Ritz principle for the Dirichlet Laplacian on $B$,
	\begin{equation}\label{eq:rayleigh}
		\int_B |\nabla v|_h^2\, dV_h \;\ge\; \lambda_1(B;h)\,\int_B v^2\, dV_h \qquad \text{for all } v\in H_0^1(B).
	\end{equation}
	Taking $v=u$ with $\|u\|_{L^2(B)}=1$ and combining \eqref{eq:phase-min} and \eqref{eq:rayleigh} gives
	\begin{equation}\label{eq:lower-bound}
		\mathrm{Var}_h(p)[\psi] \;\ge\; \hbar^2\,\lambda_1(B;h).
	\end{equation}
	Equality in \eqref{eq:lower-bound} requires equality in both \eqref{eq:phase-min} and \eqref{eq:rayleigh}. The latter forces $u$ to be a first Dirichlet eigenfunction $u_1>0$, unique up to scaling; the former forces $\nabla\varphi$ to be $\mu$-a.e.\ constant, which reduces to a constant phase on general manifolds and allows additional plane-wave factors in Euclidean balls. This proves \eqref{eq:equivalence}, and \eqref{eq:heisenberg} follows immediately from the definition $\sigma_p^2=\mathrm{Var}_h(p)[\psi]$.
\end{proof}
Having established the coordinate-free lower bound on momentum uncertainty for states strictly localized in geodesic balls, the slice-wise quantity 
\begin{equation}
	\sigma_{p}\,r_{g}\ \ge\ \hbar\,\mathcal{J}_{\Theta}(x),\qquad \mathcal{J}_{\Theta}(x):=r_{g}\sqrt{\lambda_{1}(B_{r_{g}}(x);h_{\Theta})}.
	\label{eq:J}
\end{equation}
controls \(\sigma_p r_g\).

We now pass from this model case to the global question. Specifically, for each admissible slice we take the infimum of \(\mathcal{J}_\Theta\) over the exterior and then seek the slice that minimizes this uniform constant.
The following section formulates this variational program precisely, derives the metric-variation formula together with the Rayleigh principle to compare \(\lambda_1\) under slice deformations.

\section{The variational problem}
\label{sec:variational-proof}
We seek the best \emph{location-independent} lower bound on $\sigma_{p}r_{g}/\hbar$ that can be obtained uniformly over the exterior by choosing the slice. For $\Theta\in\mathcal{S}$ define
\begin{equation}\label{eq:ctheta}
c(\Theta):=\inf_{x\in\Sigma_{\Theta}} \mathcal{J}_{\Theta}(x).
\end{equation}
\noindent
We also record the slice-optimization problem
\begin{equation}\label{eq:cstar}
c^{\star}:=\sup_{\Theta\in\mathcal{S}} c(\Theta).
\end{equation}
By construction, $c(\Theta)$ is the best constant such that $\sigma_{p}r_{g}\ge \hbar\,c(\Theta)$ holds at every point of the slice.

\begin{lemma}[Schwarzschild–PG slice: exact constant]\label{lem:PG-exact}
	Let $(M,g)$ be the Schwarzschild exterior and let $\Theta_{\mathrm{PG}}$ denote the Painlev\'e-Gullstrand time function.
	Then, for all $x\in\Sigma_{\Theta_{\mathrm{PG}}}$ and all $r_g>0$ for which $B_{r_g}(x)$ is defined,
	\[
	\mathcal{J}_{\Theta_{\mathrm{PG}}}(x)
	= r_g\,\sqrt{\lambda_1\big(B_{r_g}(x);\,h_{\Theta_{\mathrm{PG}}}\big)}=\pi,
	\qquad\text{hence}\quad c(\Theta_{\mathrm{PG}})=\pi.
	\]
\end{lemma}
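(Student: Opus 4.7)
The plan is to reduce the statement to the standard Dirichlet eigenvalue of the Euclidean $3$-ball, by first identifying $(\Sigma_{\Theta_{\mathrm{PG}}},h_{\Theta_{\mathrm{PG}}})$ with flat $\mathbb{R}^3$.

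First I would recall that for Schwarzschild, $f(r)=1-2M/r$, and that the Painlev\'e-Gullstrand time function is characterized by $\Theta_{\mathrm{PG}}'(r)=\sqrt{2M/r}/f(r)$. Plugging this into the general radial coefficient from the Setting,
\[
h_{\Theta_{\mathrm{PG}}}(\partial_r,\partial_r)\;=\;\frac{1}{f(r)}-f(r)\bigl(\Theta_{\mathrm{PG}}'(r)\bigr)^2\;=\;\frac{1}{f(r)}-\frac{2M/r}{f(r)}\;=\;\frac{f(r)}{f(r)}\;=\;1,
\]
while the angular block $r^2 d\Omega^2$ is untouched. Hence $h_{\Theta_{\mathrm{PG}}}=dr^2+r^2 d\Omega^2$ is the Euclidean flat metric; in particular $(\Sigma_{\Theta_{\mathrm{PG}}},h_{\Theta_{\mathrm{PG}}})$ is isometric to $(\mathbb{R}^3,\delta)$, and this extends smoothly across $r=r_+=2M$, confirming horizon regularity.

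Next, because the geometry is flat Euclidean, the geodesic ball $B_{r_g}(x)\subset\Sigma_{\Theta_{\mathrm{PG}}}$ is (isometrically) an ordinary Euclidean $3$-ball of radius $r_g$. The first Dirichlet eigenvalue of $-\Delta$ on a Euclidean $3$-ball of radius $r_g$ is well known to be
\[
\lambda_1\bigl(B_{r_g}(x);\,h_{\Theta_{\mathrm{PG}}}\bigr)\;=\;\left(\frac{\pi}{r_g}\right)^{2},
\]
realized by the radially symmetric ground state $\psi(y)=\sin(\pi|y-x|/r_g)/|y-x|$ (the first positive root of the spherical Bessel function $j_0$ is $\pi$). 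Consequently $r_g\sqrt{\lambda_1(B_{r_g}(x);h_{\Theta_{\mathrm{PG}}})}=\pi$ pointwise, independently of $x$ and $r_g$.

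Finally, taking the infimum over $x\in\Sigma_{\Theta_{\mathrm{PG}}}$ in the definition \eqref{eq:ctheta} gives $c(\Theta_{\mathrm{PG}})=\pi$. There is no genuine obstacle in this argument: the only non-mechanical step is the cancellation showing $h_{\Theta_{\mathrm{PG}}}(\partial_r,\partial_r)=1$, after which everything reduces to the classical Euclidean ball eigenvalue. If needed, a brief remark can be added to verify that the PG $\Theta'$ is indeed the one making $\tau=t+\Theta_{\mathrm{PG}}(r)$ extend smoothly across the horizon, so that $\Theta_{\mathrm{PG}}\in\mathcal{S}$ and the statement is not vacuous.
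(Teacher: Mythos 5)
Your proposal is correct and follows essentially the same route as the paper: identify the induced PG metric as flat Euclidean, invoke $\lambda_1\bigl(B_\delta(r_g)\bigr)=\pi^2/r_g^2$ for the Euclidean $3$-ball, and take the (constant) infimum over $x$. The only difference is that you verify the flatness explicitly via the cancellation $\tfrac{1}{f}-f\bigl(\Theta_{\mathrm{PG}}'\bigr)^2=1$ with $\Theta_{\mathrm{PG}}'=\sqrt{2M/r}/f$, a detail the paper asserts without computation; this is a correct and harmless elaboration, not a different argument.
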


\begin{proof}
	On $\Sigma_{\Theta_{\mathrm{PG}}}$ the induced metric is Euclidean, $h_{\Theta_{\mathrm{PG}}}=\delta$; hence
	$B_{r_g}(x)$ is isometric to the Euclidean ball $B_\delta(r_g)\subset\mathbb{R}^3$.
	Since $\lambda_1\big(B_\delta(r_g)\big)=\pi^2/r_g^2$, we get $r_g\sqrt{\lambda_1}=\pi$ for all $x$, and thus $c(\Theta_{\mathrm{PG}})=\pi$.
\end{proof}

The following Theorem provides a first, geometry-independent baseline: for any admissible exterior slice on which geodesic balls are well defined, $c(\Theta)$ cannot drop below $1/2$. Equivalently, the Heisenberg-type product $\sigma_p r_g$ admits a uniform, location-independent lower bound—at least $\hbar/2$—throughout the slice. The proof hinges only on the distance-to-boundary Hardy inequality~\cite{BrezisMarcus1997,Davies1999,Mazya2011} on geodesic balls together with the Rayleigh variational characterization of the first Dirichlet eigenvalue.

\begin{theorem}[Hardy baseline on strongly geodesically convex balls]\label{thm:unconditional-hardy}
	Fix $r_g>0$ and let $\Theta$ be an admissible slice as in Section~\ref{sec:setting}. Assume that for every $x\in\Sigma_\Theta$ the geodesic ball $B_{r_g}(x)$ is well-defined (in particular $r_g<\mathrm{inj}_{h_\Theta}(x)$). Assume in addition that each ball $B_{r_g}(x)$ is \emph{strongly geodesically convex} so that the inward normal exponential map from $\partial B_{r_g}(x)$ is a diffeomorphism onto $B_{r_g}(x)$ (equivalently, the boundary normal injectivity radius at $\partial B_{r_g}(x)$ is at least $r_g$). Then $c(\Theta) \ge \tfrac{1}{2}$ and hence
	\begin{equation}	
		\sigma_{p}r_{g}\ \ge\ \frac{\hbar}{2},
	\end{equation}
	where $c(\Theta)$ is defined in (\ref{eq:ctheta}) and $\mathcal{J}_\Theta$ is given by \eqref{eq:J}.
\end{theorem}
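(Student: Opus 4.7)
\medskip
The plan is to combine the distance-to-boundary Hardy inequality on $B_{r_g}(x)$ with a trivial upper bound on the distance function, and then invoke the Rayleigh characterization \eqref{eq:rayleigh-char} to read off a lower bound on $\lambda_1$. The strong geodesic convexity hypothesis is what provides precisely the geometric ingredient needed to obtain a Hardy inequality on $B_{r_g}(x)$ with the sharp Euclidean constant $1/4$, and from there the claim follows by elementary manipulation and Lemma~\ref{lem:variance-bound}.

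Concretely, fix $x\in\Sigma_\Theta$ and $u\in H_0^1(B_{r_g}(x))\setminus\{0\}$. First I would apply the Riemannian distance-to-boundary Hardy inequality (Lemma~\ref{lem:hardy-ball} under the standing hypothesis),
\[
\int_{B_{r_g}(x)} |\nabla u|_{h_\Theta}^2 \, dV_{h_\Theta} \;\ge\; \frac{1}{4}\int_{B_{r_g}(x)}\frac{u^2(y)}{\mathrm{dist}_{h_\Theta}\!\big(y,\partial B_{r_g}(x)\big)^{2}}\,dV_{h_\Theta}(y).
\]
This is the Brezis--Marcus/Davies Hardy inequality transported to the Riemannian setting through global boundary normal (Fermi) coordinates off $\partial B_{r_g}(x)$: the diffeomorphism assumption on the inward normal exponential map lets one slice the three-dimensional integral along inward normal geodesics and reduce to an integrated family of one-dimensional Hardy inequalities, retaining the sharp constant $1/4$. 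Next, I would use the elementary bound $\mathrm{dist}_{h_\Theta}(y,\partial B_{r_g}(x))\le r_g$ for every $y\in B_{r_g}(x)$ (any minimizing geodesic from $x$ through $y$ exits the ball at arclength $r_g-\mathrm{dist}_{h_\Theta}(x,y)$ from $y$) to weaken the Hardy right-hand side into
\[
\int_{B_{r_g}(x)} |\nabla u|_{h_\Theta}^2 \, dV_{h_\Theta} \;\ge\; \frac{1}{4 r_g^{2}}\int_{B_{r_g}(x)} u^{2}\, dV_{h_\Theta}.
\]

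Dividing by $\|u\|_{L^2(B_{r_g}(x))}^{2}$ and taking the infimum over $u\in H_0^1(B_{r_g}(x))\setminus\{0\}$ via \eqref{eq:rayleigh-char} produces $\lambda_1(B_{r_g}(x);h_\Theta)\ge 1/(4 r_g^{2})$, hence $\mathcal{J}_\Theta(x)=r_g\sqrt{\lambda_1(B_{r_g}(x);h_\Theta)}\ge 1/2$ for every $x\in\Sigma_\Theta$. Passing to the infimum in $x$ gives $c(\Theta)\ge 1/2$, and combining this with Lemma~\ref{lem:variance-bound} yields $\sigma_p r_g \ge \hbar\,c(\Theta)\ge \hbar/2$, which is the stated bound.

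The main obstacle is the Hardy step itself: securing the \emph{sharp} constant $1/4$ in the Riemannian setting. Outside a convexity regime, curvature contributions from the Jacobian of the boundary normal chart can degrade the constant or produce unwanted remainder terms; it is precisely the strong geodesic convexity hypothesis — equivalently, the boundary normal injectivity radius at $\partial B_{r_g}(x)$ being at least $r_g$ — that allows the Fermi chart to cover all of $B_{r_g}(x)$ and the three-dimensional estimate to reduce cleanly to the one-dimensional sharp Hardy inequality along each normal ray. Once this input is in hand the remainder of the argument (the pointwise distance bound and the Rayleigh step) is completely routine, and importantly the proof uses nothing specific to the black-hole background beyond admissibility of $\Theta$, making the baseline $c(\Theta)\ge 1/2$ genuinely slice-uniform.
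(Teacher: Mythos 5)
Your proposal is correct and follows essentially the same route as the paper's own proof: the distance-to-boundary Hardy inequality of Lemma~\ref{lem:hardy-ball} under the strong convexity hypothesis, the trivial bound $\mathrm{dist}_{h_\Theta}(y,\partial B_{r_g}(x))\le r_g$, the Rayleigh characterization \eqref{eq:rayleigh-char} to conclude $\lambda_1\ge 1/(4r_g^2)$, and the infimum over centers to obtain $c(\Theta)\ge\tfrac12$. Your added justification of the distance bound and the explicit final appeal to Lemma~\ref{lem:variance-bound} are fine and consistent with how the paper packages the physical conclusion.
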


\begin{proof}
	Fix $x\in\Sigma_\Theta$ and consider the Dirichlet Laplacian on the geodesic ball $B_{r_g}(x)$. For any $u\in H^1_0(B_{r_g}(x))$, the \emph{distance-to-boundary Hardy inequality} yields (see Lemma~\ref{lem:hardy-ball} in the Appendix)
	\begin{equation*}
		\int_{B_{r_g}(x)} |\nabla u|_{h_\Theta}^{2}\,dV_{h_\Theta}\ \ge\ \frac{1}{4}\int_{B_{r_g}(x)} \frac{u^{2}}{\mathrm{dist}_{h_\Theta}(\,\cdot\,,\partial B_{r_g}(x))^{2}}\,dV_{h_\Theta}\,.
	\end{equation*}
	Since $\mathrm{dist}_{h_\Theta}(y,\partial B_{r_g}(x))\le r_g$ for all $y\in B_{r_g}(x)$, we further have
	\begin{equation*}
		\int_{B_{r_g}(x)} |\nabla u|_{h_\Theta}^{2}\,dV_{h_\Theta}\ \ge\ \frac{1}{4r_g^{2}}\int_{B_{r_g}(x)} u^{2}\,dV_{h_\Theta}\,.
	\end{equation*}
	Taking the infimum over $u\not\equiv0$ and using the Rayleigh characterization \eqref{eq:rayleigh} gives
	\begin{equation*}
		\lambda_{1}\left(B_{r_g}(x);h_\Theta\right)\ \ge\ \frac{1}{4\,r_g^{2}}\,.
	\end{equation*}
	Multiplying by $r_g$ and taking the square root, the definition \eqref{eq:J} yields
	\begin{equation*}
		\mathcal{J}_\Theta(x)\ =\ r_g\,\sqrt{\lambda_{1}\left(B_{r_g}(x);h_\Theta\right)}\ \ge\ \frac{1}{2}\,.
	\end{equation*}
	Finally, taking the infimum over $x\in\Sigma_\Theta$ shows $c(\Theta)=\inf_{x}\mathcal{J}_\Theta(x)\ge\tfrac12$, proving the claim. The argument is local and does not require curvature or convexity assumptions beyond the well-posedness of the geodesic balls. It uses only the boundary-distance Hardy inequality together with the stated normal-injectivity (strong convexity) hypothesis on geodesic balls.
\end{proof}

\begin{lemma}[Non-attainment and gap above the Hardy bound]\label{lem:hardy-gap}
	Let $r_g>0$ be fixed and let $\Theta$ be an admissible slice as in Section~\ref{sec:setting}. Assume that for every $x\in\Sigma_\Theta$ the geodesic ball $B_{r_g}(x)$ is well-defined (in particular $r_g<\mathrm{inj}_{h_\Theta}(x)$, and the hypothesis of Lemma~\ref{lem:hardy-ball} holds for each ball $B_{r_g}(x)$). Then the following hold.
	
	\smallskip
	\noindent\emph{(i) Strict pointwise inequality.} For every $x\in\Sigma_\Theta$ one has
	\begin{equation*}
		\mathcal{J}_\Theta(x)\;=\; r_g\,\sqrt{\lambda_1\left(B_{r_g}(x);h_\Theta\right)}\;>\;\frac{1}{2}\,,
	\end{equation*}
	i.e., the Hardy lower bound from Theorem\,\ref{thm:unconditional-hardy} is never attained at a geodesic ball.
	
	\smallskip
	\noindent\emph{(ii) Uniform interior gap.} If $K\Subset\Sigma_\Theta$ is compact with $\mathrm{dist}_{h_\Theta}(K,\partial\Sigma_\Theta)\ge r_g$ (so that every $B_{r_g}(x)$ with $x\in K$ is fully contained), then there exists $\delta_K>0$ such that
	\begin{equation*}
		\inf_{x\in K}\,\mathcal{J}_\Theta(x)\;\ge\;\frac{1}{2}+\delta_K\,.
	\end{equation*}
	
	\smallskip
	\noindent\emph{(iii) Global slice-wise gap.} Consequently,
	\begin{equation*}
		c(\Theta)\;=\;\inf_{x\in\Sigma_\Theta}\mathcal{J}_\Theta(x)\;>\;\frac{1}{2}\,.
	\end{equation*}
\end{lemma}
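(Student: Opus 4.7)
My plan is to derive (i) by localizing the source of slack in the two-step chain that produced the Hardy baseline, then to bootstrap (i) into (ii) by a continuity-and-compactness argument, and finally to combine (ii) with asymptotic flatness to obtain (iii).

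For part~(i), the inequality $\lambda_1(B_{r_g}(x);h_\Theta)\ge 1/(4r_g^2)$ in Theorem~\ref{thm:unconditional-hardy} was obtained from the chain
\[
\int_{B_{r_g}(x)}|\nabla u|^2_{h_\Theta}\,dV_{h_\Theta}\;\ge\;\tfrac14\int_{B_{r_g}(x)}\frac{u^2}{d^2}\,dV_{h_\Theta}\;\ge\;\frac{1}{4r_g^2}\int_{B_{r_g}(x)}u^2\,dV_{h_\Theta},
\]
where $d(y):=\mathrm{dist}_{h_\Theta}(y,\partial B_{r_g}(x))$. Under the standing strong geodesic convexity hypothesis one has $d(y)=r_g-\mathrm{dist}_{h_\Theta}(y,x)$, so $d(y)=r_g$ only at the single center point. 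Testing the Rayleigh quotient with the strictly positive first Dirichlet eigenfunction $u_1$ therefore makes the second inequality strict, whence $\lambda_1(B_{r_g}(x);h_\Theta)>1/(4r_g^2)$, i.e., $\mathcal{J}_\Theta(x)>1/2$. I would emphasize that non-attainment of the Hardy constant $1/4$ itself is not needed: the a.e.\ strict inequality $d<r_g$ alone already produces the gap.

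For part~(ii), the key input is continuity of the map $x\mapsto\lambda_1(B_{r_g}(x);h_\Theta)$ on $K$. I would establish this by pulling each ball back to a fixed Euclidean reference ball $B_{r_g}(0)\subset\mathbb{R}^3$ via the exponential chart $\exp_x^{-1}$, producing a family of pulled-back metrics $\{g_x\}$ on the reference ball that depend continuously (indeed $C^1$) on $x$ because $h_\Theta\in C^2$ and the boundary-normal injectivity radius is bounded below on $K$ by compactness. The Rayleigh quotient $R_{g_x}$ then depends continuously on $x$ uniformly on bounded subsets of $H^1_0(B_{r_g}(0))$, and a standard test-function sandwich yields continuity of $\lambda_1$ in $x$. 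Compactness of $K$ gives an attained minimum, which by (i) is strictly greater than $1/2$; the difference is the desired $\delta_K$.

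For part~(iii), I would split $\Sigma_\Theta$ into an asymptotic and a bounded regime. Asymptotic flatness, combined with the far-out two-sided estimate~\eqref{eq:farout-2sided} and the limit~\eqref{eq:AF-limit}, yields $\mathcal{J}_\Theta(x)\to\pi$ as $r(x)\to\infty$, so there exist $R_\infty$ and $\delta_\infty>0$ with $\mathcal{J}_\Theta(x)\ge 1/2+\delta_\infty$ on $\{r(x)\ge R_\infty\}$. On the complementary set $K:=\{x\in\Sigma_\Theta:r(x)\le R_\infty\}$, which is compact thanks to horizon-regularity allowing smooth extension through $r=r_+$, I would apply (ii) to extract a gap $\delta_K>0$. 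Combining, $c(\Theta)\ge \tfrac12+\min(\delta_K,\delta_\infty)>\tfrac12$.

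\textbf{Main obstacle.} The principal technical step is the continuity claim underpinning (ii): showing that $\lambda_1(B_{r_g}(x);h_\Theta)$ varies continuously with the center $x$. The pullback-by-exponential-chart approach reduces this to a routine domain-perturbation statement, but carrying it out cleanly requires the strong-convexity and boundary-normal-injectivity hypotheses to hold uniformly across the compact set $K$; without such uniformity the balls could degenerate and the continuity argument would break down precisely where the infimum might be attained.
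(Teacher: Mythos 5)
Your proposal is correct and follows essentially the same route as the paper: strictness in (i) by testing the Hardy chain with the positive ground state $u_1$ (using $d<r_g$ off the center), the interior gap (ii) from continuity of $x\mapsto\lambda_1(B_{r_g}(x);h_\Theta)$ plus compactness (your exponential-chart pullback simply supplies the continuity that the paper cites from the shape-sensitivity literature), and (iii) from splitting $\Sigma_\Theta$ into a far-out AF regime and a compact region made boundaryless by horizon-regular extension. The one caveat is that the two-sided estimate \eqref{eq:farout-2sided} and the limit \eqref{eq:AF-limit} are not external inputs but are derived \emph{inside} the paper's proof of this very lemma (via bilipschitz comparison of $h_\Theta$ with $\delta$, Rayleigh-quotient bounds, and domain monotonicity against Euclidean balls), so a fully self-contained version of your step (iii) should reproduce that derivation rather than cite it.
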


\begin{proof}
	Fix $x\in\Sigma_\Theta$ and let $B:=B_{r_g}(x)$. Write $\lambda_1(B;h_\Theta)$ for the first Dirichlet eigenvalue and let $u_1>0$ be its $L^2$-normalized ground state on $B$ (existence and uniqueness up to scaling are standard). Denote the boundary distance by $d(y):=\mathrm{dist}_{h_\Theta}(y,\partial B)$.\\
	\\
	\textbf{Strictness on a single ball.}
	By the distance-to-boundary Hardy inequality~\cite{BrezisMarcus1997,Davies1999,Mazya2011} used in the proof of Theorem~\ref{thm:unconditional-hardy} and the Rayleigh characterization \eqref{eq:rayleigh}, we have for $u_1$
	\begin{equation*}
		\lambda_1(B;h_\Theta)\;=\;\frac{\displaystyle\int_B |\nabla u_1|^2_{h_\Theta}\,dV_{h_\Theta}}{\displaystyle\int_B u_1^2\,dV_{h_\Theta}}
		\;\ge\;\frac{1}{4}\,\frac{\displaystyle\int_B \frac{u_1^2}{d^2}\,dV_{h_\Theta}}{\displaystyle\int_B u_1^2\,dV_{h_\Theta}}\,.
	\end{equation*}
	Since $d(y)\le r_g$ for all $y\in B$ and $d(y)<r_g$ on a set of positive measure (indeed $d$ attains its maximum $r_g$ only at the center), while $u_1>0$ in $B$, we obtain the \emph{strict} improvement
	\begin{equation*}
		\int_B \frac{u_1^2}{d^2}\,dV_{h_\Theta}\;>\;\frac{1}{r_g^2}\int_B u_1^2\,dV_{h_\Theta}\,.
	\end{equation*}
	Combining the two displays yields
	\begin{equation*}
		\lambda_1(B;h_\Theta)\;>\;\frac{1}{4\,r_g^{2}}\qquad\Longrightarrow\qquad
		\mathcal{J}_\Theta(x)\;=\;r_g\sqrt{\lambda_1(B;h_\Theta)}\;>\;\frac{1}{2}\,,
	\end{equation*}
	proving (i).\\
	\\
	\textbf{Uniform gap on compact interior sets.}
	The map $x\mapsto \lambda_1(B_{r_g}(x);h_\Theta)$ is continuous (indeed $C^1$ under smooth variations; see, e.g., \cite{Henrot2006,SokolowskiZolesio1992}) because moving the center induces a smooth domain variation and the Dirichlet ground state depends smoothly on such deformations; see the variational framework summarized in Section~\ref{sec:variational-proof} (Rayleigh principle \eqref{eq:rayleigh} and first-variation formula). Therefore $x\mapsto \mathcal{J}_\Theta(x)$ is continuous. By the strict pointwise inequality proved above, $\mathcal{J}_\Theta(x)>\tfrac12$ for all $x\in K$; hence by compactness,
	\begin{equation*}
		\delta_K\;:=\;\inf_{x\in K}\big(\mathcal{J}_\Theta(x)-\tfrac12\big)\;>\;0,
	\end{equation*}
	which gives (ii).\\
	\\
	\textbf{Global gap.}
	Fix $r_g>0$ and an admissible slice $\Sigma_\Theta$ with induced metric $h_\Theta$ as in the Setting. 
	By part~(i) one has $\mathcal J_\Theta(x)> \tfrac12$ for every $x\in\Sigma_\Theta$, and, as observed in the proof of (ii), the map $x\mapsto \mathcal J_\Theta(x)=r_g\sqrt{\lambda_1(B_{r_g}(x);h_\Theta)}$ is continuous (smooth dependence under domain/metric variations).
	
	\smallskip
	\emph{Step 1: A far–out lower bound from asymptotic flatness.}
	Let $\varepsilon\in(0,\tfrac12)$ be arbitrary. By $C^2$–asymptotic flatness, there exists $R=R(\varepsilon)$ such that for all points $x$ with areal radius $r(x)\ge R$ the following hold on $B_{r_g}(x)$:
	\[
	(1-\varepsilon)\,\delta \;\le\; h_\Theta \;\le\; (1+\varepsilon)\,\delta,
	\]
	and, hence, the corresponding geodesic distances and balls are bilipschitz–comparable:
	\[
	(1-C\varepsilon)\,\rho_\delta \le \rho_{h_\Theta} \le (1+C\varepsilon)\,\rho_\delta, 
	\qquad 
	B_\delta\Big(\frac{r_g}{1+C\varepsilon}\Big)\subset B_{r_g}(x)\subset B_\delta\Big(\frac{r_g}{1-C\varepsilon}\Big),
	\]
	for a universal constant $C>0$ (depending only on the dimension). From the Rayleigh quotient comparisons (energy/volume) one obtains, uniformly in $u\neq 0$,
	\[
	R_{h_\Theta}(u)\;\ge\;\frac{1-C\varepsilon}{\,1+C\varepsilon\,}\;R_\delta(u),
	\]
	whence
	\[
	\lambda_1\big(B_{r_g}(x);h_\Theta\big)\;\ge\;\frac{1-C\varepsilon}{\,1+C\varepsilon\,}\,\lambda_1\big(B_{r_g}(x);\delta\big).
	\]
	By domain monotonicity for Dirichlet eigenvalues and the inclusions above,
	\[
	\lambda_1\big(B_{r_g}(x);\delta\big)\;\ge\;\lambda_1\Big(B_\delta\Big(\frac{r_g}{1-C\varepsilon}\Big);\delta\Big)\;=\;\Big(\frac{(1-C\varepsilon)\,\pi}{r_g}\Big)^{2}.
	\]
	Combining the two displays yields, for all $x$ with $r(x)\ge R$, (cf.~\cite{Chavel1984})
	\begin{equation}\label{eq:farout-lb}
		\mathcal J_\Theta(x)\;=\;r_g\sqrt{\lambda_1\big(B_{r_g}(x);h_\Theta\big)} 
		\;\ge\; \pi\,\frac{(1-C\varepsilon)^{3/2}}{\sqrt{1+C\varepsilon}}.
	\end{equation}

\noindent\emph{Matching AF upper bound and two-sided estimate.}
By the same bilipschitz comparisons of $h_\Theta$ with $\delta$ but with inequalities reversed, one also has the Rayleigh-quotient bound
\[
R_{h_\Theta}(u)\;\le\;\frac{1+C\varepsilon}{\,1-C\varepsilon\,}\;R_\delta(u),
\]
and, using the inclusions
\[
B_\delta\Big(\tfrac{r_g}{1+C\varepsilon}\Big)\ \subset\ B_{r_g}(x)\ \subset\ B_\delta\Big(\tfrac{r_g}{1-C\varepsilon}\Big),
\]
together with domain monotonicity for Dirichlet eigenvalues, we obtain for all $x$ with $r(x)\ge R$ the matching upper bound
\begin{equation}\label{eq:farout-ub}
	\mathcal J_\Theta(x)\;\le\; \pi\,\frac{(1+C\varepsilon)^{3/2}}{\sqrt{1-C\varepsilon}}.
\end{equation}
Combining \eqref{eq:farout-lb} and \eqref{eq:farout-ub} yields the two-sided AF estimate
\begin{equation}\label{eq:farout-2sided}
	\pi\,\frac{(1-C\varepsilon)^{3/2}}{\sqrt{1+C\varepsilon}}\;\le\; \mathcal  J_\Theta(x)\;\le\;\pi\,\frac{(1+C\varepsilon)^{3/2}}{\sqrt{1-C\varepsilon}},
	\qquad r(x)\ge R.
\end{equation}
\begin{remark}[Origin of the exponent $\boldsymbol{3/2}$ in \eqref{eq:farout-2sided}]
The factors $(1\pm C\varepsilon)^{3/2}$ arise from combining (i) the bilipschitz comparison of energies
$\int |\nabla u|_{h_\Theta}^2\,dV_{h_\Theta}$ and $\int |\nabla u|_{\delta}^2\,dV_{\delta}$ (two metric factors)
with (ii) the volume form comparison $dV_{h_\Theta}\asymp (1\pm C\varepsilon)^{3/2} dV_{\delta}$, while the
denominators $\sqrt{1\pm C\varepsilon}$ originate from the Rayleigh-quotient normalization.
Heuristically, the domain inclusion $B_{\delta}\big(r_g/(1+ C\varepsilon)\big)\subset B_{r_g}(x)\subset B_{\delta}\big(r_g/(1-C\varepsilon)\big)$
contribute the remaining radius scaling to $\lambda_1\sim r_g^{-2}$. This explains the structure of \eqref{eq:farout-ub}-\eqref{eq:farout-lb}.
For a textbook account of these comparisons, see, e.g.,~\cite[Section IV.1]{Chavel1984}.
\end{remark}
\noindent
In particular, letting $\varepsilon\downarrow 0$ we conclude the asymptotic limit
\begin{equation}\label{eq:AF-limit}
	\lim_{\,r(x)\to\infty} \mathcal J_\Theta(x)\;=\;\pi.
\end{equation}
Since the right-hand side tends to $\pi$ as $\varepsilon\downarrow 0$, we may fix $\varepsilon$ so small that
	\[
	\pi\,\frac{(1-C\varepsilon)^{3/2}}{\sqrt{1+C\varepsilon}}\;\ge\;\tfrac12+\delta_\infty
	\]
	for some $\delta_\infty>0$. Hence there is $R<\infty$ with
	\begin{equation}\label{eq:farout-gap}
		\inf_{\,r(x)\ge R} \mathcal J_\Theta(x)\;\ge\;\tfrac12+\delta_\infty.
	\end{equation}	
	\smallskip
	\emph{Step 2: A bounded-region lower bound by compactness and continuity.}
	Viewing $\Sigma_\Theta$ as the horizon-regularly extended slice (a manifold without boundary), the closed region $\{x\in\Sigma_\Theta: r(x)\le R\}$ is compact by spherical symmetry. 
	By part~(i) we have $\mathcal{J}_\Theta(x)>\tfrac12$ on this compact set; by continuity, the positive function $x\mapsto \mathcal{J}_\Theta(x)-\tfrac12$ attains a strictly positive minimum $\delta_0>0$ there.  Thus,
	\begin{equation}\label{eq:bounded-gap}
		\inf_{\,r(x)\le R} \mathcal J_\Theta(x)\;\ge\;\tfrac12+\delta_0.
	\end{equation}
	\smallskip
	\emph{Conclusion.}
	Combining \eqref{eq:farout-gap} and \eqref{eq:bounded-gap} we obtain a uniform slice–wise gap
	\[
	c(\Theta)\;=\;\inf_{x\in\Sigma_\Theta} \mathcal J_\Theta(x)\;\ge\;\tfrac12+\min\{\delta_\infty,\delta_0\}\;>\;\tfrac12.
	\]
	This proves (iii).
\end{proof}

\medskip
\noindent\textbf{Corollary (Optimality of the PG slice).}
In the Schwarzschild exterior one has $c^\star=\pi$. In particular, the Painlev\'e-Gullstrand foliation attains the supremum in \textup{(\ref{eq:cstar})}.

\noindent\emph{Proof.}
By Lemma~\ref{lem:PG-exact}, $c(\Theta_{\rm PG})=\pi$, hence $c^\star\ge \pi$. For any admissible slice, (\ref{eq:AF-limit}) implies $c(\Theta)\le \lim_{r\to\infty}\mathcal J_\Theta(r)=\pi$. Thus $c^\star\le \pi$, and equality holds. \qed
\\
\\
\noindent\textbf{Physical intuition.}
Localizing a state by a hard Dirichlet wall on the geodesic ball $B_{r_g}(x)$ fixes a single geometric length scale, the distance to the boundary. 
The momentum cost is then set by how rapidly a wave can oscillate before it ``feels'' the wall. 
The variational identity \eqref{eq:J} packages this into the dimensionless number $\mathcal{J}_\Theta(x)$: it is \emph{entirely} determined by the intrinsic three-geometry $(\Sigma_\Theta,h_\Theta)$ and does not depend on the extrinsic time flow.

The universal Hardy baseline is the statement that no matter how the slice bends, a mode confined to $B_{r_g}(x)$ must carry a minimum wavenumber of order $1/r_g$. 
Precisely this is what Theorem~\ref{thm:unconditional-hardy} enforces through the distance-to-boundary Hardy inequality, yielding the slice-wise bound $\sigma_p r_g \ge \hbar/2$ uniformly across the exterior. 
The bound is never saturated: because only the center points of $B_{r_g}(x)$ sit at maximal boundary distance, an $L^2$-normalized ground state necessarily ``samples'' closer boundary layers and therefore pays extra kinetic energy. 
Lemma~\ref{lem:hardy-gap} quantifies this as a strict pointwise inequality with a positive, location-independent gap on compact regions; in the AF zone the two-sided estimate \eqref{eq:farout-2sided} drives the limit $\mathcal{J}_\Theta(x)\to\pi$ in \eqref{eq:AF-limit}. 

Slice choice changes the uncertainty scale only through $h_\Theta$. 
In particular, on the Painlev\'e-Gullstrand slice the induced metric is Euclidean, and the ball $B_{r_g}(x)$ is isometric to its flat counterpart. 
Consequently, $\mathcal{J}_{\Theta_{\mathrm{PG}}}(x)\equiv\pi$ (Lemma~\ref{lem:PG-exact}) and the product recovers the exact Euclidean benchmark $\sigma_p r_g\ge \pi\hbar$ everywhere. 
Far out, any admissible slice reproduces the same $\pi\hbar$ scale by \eqref{eq:AF-limit}. 

From a global perspective, the relevant slice parameter is the location-independent constant $c(\Theta):=\inf_{x\in\Sigma_\Theta}\mathcal{J}_\Theta(x)$ in \eqref{eq:ctheta}. 
The Hardy baseline tells us $c(\Theta)\ge \tfrac12$ for all admissible slices, while Lemma~\ref{lem:PG-exact} shows that the PG foliation attains $c(\Theta)=\pi$ everywhere. 
Combining this with the far-out limit \eqref{eq:AF-limit} yields the corollary that the variational optimum $c^\star:=\sup_{\Theta}c(\Theta)$ in \eqref{eq:cstar} equals $\pi$, and it is achieved by the Painlev\'e-Gullstrand slice. 
In short: $\pi\hbar$ is the sharp, slice-optimized uncertainty scale compatible with the Schwarzschild exterior, whereas $\hbar/2$ is the robust geometry-independent floor guaranteed by Hardy. 

\section{Interior extension across the horizon: making the geometric Heisenberg picture accessible inside the black hole}
\label{sec:interior-extension}

On any \emph{horizon–regular} slice $\Sigma_\Theta$ (so that $r=r_+$ is not a boundary but an ordinary interior two–sphere), cf.\ \cite[Sections II-III]{MartelPoisson2001}, the construction used in the exterior passes verbatim to the region $\{r<r_+\}$:

\medskip
\noindent\textbf{Geodesic balls and the local bound.}
For any interior point $x\in\Sigma_\Theta$ and any geodesic radius $r_g>0$ with $r_g<\mathrm{inj}_{h_\Theta}(x)$ so that the geodesic ball $B_{r_g}(x)$ is well defined and strongly geodesically convex, the variance–eigenvalue equivalence on a ball (Lemma~\ref{lem:variance-bound}) yields the coordinate–free Heisenberg estimate
\begin{equation}\label{eq:int-Heisenberg}
	\sigma_p^2\ \ge\ \hbar^2\,\lambda_1\!\left(B_{r_g}(x);\,h_\Theta\right),
\end{equation}
i.e.\ \eqref{eq:heisenberg} with $B=B_{r_g}(x)$ taken \emph{inside} the horizon. Invoking the sharp distance–to–boundary Hardy inequality on strongly geodesically convex geodesic balls (Lemma~\ref{lem:hardy-ball}, \eqref{eq:hardy}), the proof of Theorem~\ref{thm:unconditional-hardy} applies unchanged and gives the same slice–wise, location–independent lower bound inside:
\[
\sigma_p\,r_g\ \ge\ \hbar\,\mathcal J_\Theta(x)\ \ge\ \tfrac{\hbar}{2}\qquad\text{for all $x$ with $r(x)<r_+$ and $r_g<\mathrm{inj}_{h_\Theta}(x)$,}
\]
where $\mathcal J_\Theta$ is the pointwise quantity from \eqref{eq:J} and $c(\Theta)=\inf_{x\in\Sigma_\Theta}\mathcal J_\Theta(x)$ is as in \eqref{eq:ctheta}. In particular, the horizon itself does not create any geometric obstruction for the \emph{intrinsic} (three–dimensional) uncertainty relation.

\medskip
\noindent\textbf{PG slice: exact Euclidean scale inside and outside.}
On the Schwarzschild Painlev\'e-Gullstrand slice one has $h_{\Theta_{\mathrm{PG}}}=\delta$ everywhere on $\{r>0\}$ (flat intrinsic geometry). Consequently Lemma~\ref{lem:PG-exact} gives
\[
\mathcal J_{\Theta_{\mathrm{PG}}}(x)\equiv\pi\quad\Longrightarrow\quad \sigma_p\,r_g\ \ge\ \pi\,\hbar
\]
for \emph{all} centers $x$ (both outside and inside the horizon). Thus the Euclidean scale $\pi\hbar$ is the natural benchmark inside as well.

\medskip
\noindent\textbf{Uniform gaps above $\boldsymbol{\hbar/2}$ in the interior.}
The non–attainment and gap statement (Lemma~\ref{lem:hardy-gap}) also carries over: part~(ii) implies that on any compact interior set $K\Subset\Sigma_\Theta$ with $\mathrm{dist}_{h_\Theta}(K,\partial\Sigma_\Theta)\ge r_g$ (here $\partial\Sigma_\Theta=\varnothing$ for a horizon–regularly extended slice) there exists $\delta_K>0$ so that
\[
\inf_{x\in K}\mathcal J_\Theta(x)\ \ge\ \tfrac12+\delta_K,
\]
hence $\sigma_p\,r_g\ge (\tfrac12+\delta_K)\hbar$ uniformly on $K$. In the exterior, the far–out two–sided asymptotic estimate \eqref{eq:farout-2sided} implies the limit \eqref{eq:AF-limit} and yields the far–out gap \eqref{eq:farout-gap}; combining this with the interior compactness argument produces the global slice–wise gap, see \eqref{eq:bounded-gap}.

\medskip
\noindent\textbf{Near the central singularity.}
As long as $B_{r_g}(x)$ does not meet $r=0$, all arguments above apply without modification. If a large ball encloses the puncture, one may treat the singular point as a removable set of zero capacity in three dimensions; for Dirichlet problems on bounded domains this does not affect the first eigenvalue (see, e.g., \cite{Mazya2011}). In either case, the \eqref{eq:heisenberg}–to–\eqref{eq:hardy} route still delivers a robust $\hbar/2$ baseline, while in PG the exact $\pi\hbar$ scale persists throughout.

\medskip
\noindent\textbf{Physical picture.}
The intrinsic three–geometry, not the extrinsic time–flow, controls the uncertainty scale. On horizon–regular slices the horizon is \emph{geometrically invisible} to the Dirichlet spectral problem on geodesic balls, so the same localization–vs–momentum trade–off encoded by \eqref{eq:J}, \eqref{eq:ctheta}, and Theorem~\ref{thm:unconditional-hardy} applies inside the black hole just as it does outside; in PG it even does so at the exact Euclidean level given by Lemma~\ref{lem:PG-exact}.

\section{Axisymmetric Weyl class: transferring the Heisenberg-Hardy framework}
\label{sec:weyl}

In this section we extend the slice-local Heisenberg principle from the static, spherically symmetric exterior treated in \S\ref{sec:setting} to the full class of static, \emph{axisymmetric} vacuum geometries in standard Weyl form. The key point is that all arguments are \emph{intrinsic} to the induced Riemannian $3$-geometry on the slice. Consequently, the universal Hardy baseline, the non-attainment (gap), and the AF $\pi$-scale persist verbatim once the slices are chosen horizon-regular and spacelike, cf.\ \S\ref{sec:variational-proof}, Lemma~\ref{lem:hardy-ball}, Theorem~\ref{thm:unconditional-hardy}, and \eqref{eq:farout-2sided}-\eqref{eq:AF-limit}.

\paragraph{Weyl metrics and field equations.}
A static, axisymmetric vacuum solution can be written in Weyl coordinates $(t,\rho,z,\phi)$ as
\begin{equation}\label{eq:weyl-metric}
	ds^2 \;=\; -\,e^{2\psi(\rho,z)}\,dt^2
	\;+\; e^{-2\psi(\rho,z)}\!\left[e^{2\gamma(\rho,z)}\big(d\rho^2+dz^2\big)+\rho^2\,d\phi^2\right].
\end{equation}
The vacuum Einstein equations reduce to the flat, axisymmetric Laplace equation for $\psi$ and to line-integrations for $\gamma$,
\begin{equation}\label{eq:weyl-eq}
	\psi_{,\rho\rho}+\frac{1}{\rho}\psi_{,\rho}+\psi_{,zz}=0,\qquad
	\gamma_{,\rho}=\rho\big[(\psi_{,\rho})^2-(\psi_{,z})^2\big],\quad
	\gamma_{,z}=2\rho\,\psi_{,\rho}\psi_{,z},
\end{equation}
with the AF conditions $\psi,\gamma\to 0$ as $r:=\sqrt{\rho^2+z^2}\to\infty$.  (No additional structure from the field equations will be needed below; we only use that \eqref{eq:weyl-metric} is static and AF.)

\paragraph{Admissible slices in Weyl.}
Mimicking the spherical setting in \S\ref{sec:setting}, we consider slices given by level sets of
\begin{equation}\label{eq:weyl-slice}
	\tau \;=\; t + \Phi(\rho,z),\qquad \Sigma_\Phi:=\{\tau=\mathrm{const}\},
\end{equation}
where $\Phi$ is $C^2$ and chosen so that $\Sigma_\Phi$ is spacelike and \emph{horizon-regular} (i.e.\ the induced $3$-geometry extends smoothly across any horizon rod in Weyl coordinates; compare the horizon-regularity requirement used in \S\ref{sec:setting} and \cite[Sections~II-III]{MartelPoisson2001}).
The induced Riemannian metric on $\Sigma_\Phi$ reads
\begin{equation}\label{eq:weyl-hslice}
	h_\Phi \;=\; e^{-2\psi}\!\left[e^{2\gamma}\big(d\rho^2+dz^2\big)+\rho^2\,d\phi^2\right]
	\;-\; e^{2\psi}\big(\Phi_{,\rho}\,d\rho+\Phi_{,z}\,dz\big)^2,
\end{equation}
and the \emph{spacelike condition} amounts to the pointwise inequality
\begin{equation}\label{eq:weyl-spacelike}
	e^{\,4\psi(\rho,z)-2\gamma(\rho,z)}\!\left(\Phi_{,\rho}^2+\Phi_{,z}^2\right) \;<\;1,
\end{equation}
ensuring positive-definiteness of $h_\Phi$ on the $(\rho,z)$-plane.  We assume in addition that $h_\Phi$ is $C^2$-AF (which is guaranteed, for example, if $\psi,\gamma\to 0$ and $\nabla\Phi\to 0$ sufficiently fast as $r\to\infty$).

\paragraph{Geodesic balls and the slice-wise cost.}
For $x\in \Sigma_\Phi$, let $B_{r_g}(x)\subset(\Sigma_\Phi,h_\Phi)$ denote the open geodesic ball of radius $r_g$ and consider the first Dirichlet eigenvalue $\lambda_1(B_{r_g}(x);h_\Phi)$.  Exactly as in \eqref{eq:J}, we define the slice-wise Heisenberg cost
\begin{equation}\label{eq:weyl-J}
	\mathcal J_\Phi(x)\;:=\; r_g\sqrt{\lambda_1(B_{r_g}(x);h_\Phi)},
\end{equation}
so that, by Lemma~\ref{lem:variance-bound} and \eqref{eq:heisenberg}, any strictly localized state on $B_{r_g}(x)$ satisfies the intrinsic momentum-uncertainty lower bound
\begin{equation}\label{eq:weyl-heisenberg}
	\sigma_p\,r_g \;\ge\; \hbar\,\mathcal J_\Phi(x).
\end{equation}

\paragraph{Universal Hardy baseline and non-attainment.}
Because the proof of Theorem~\ref{thm:unconditional-hardy} uses only the boundary-distance Hardy inequality on geodesic balls (Lemma~\ref{lem:hardy-ball}, \eqref{eq:hardy}) and the Rayleigh characterization, it is \emph{slice-agnostic}.  In particular, for any horizon-regular admissible Weyl slice \eqref{eq:weyl-slice} and any ball $B_{r_g}(x)\subset(\Sigma_\Phi,h_\Phi)$ that is strongly geodesically convex (as in \S\ref{sec:variational-proof}), we have the same universal lower bound and strict gap as in the spherically symmetric case,
\begin{equation}\label{eq:weyl-floor}
	\mathcal J_\Phi(x) \;>\; \tfrac12,\qquad
	\text{with a positive gap on compact interior sets (cf.\ Lemma~\ref{lem:hardy-gap}).}
\end{equation}
Consequently, the slice-wise constants
\begin{equation}\label{eq:weyl-c-theta}
	c(\Phi):=\inf_{x\in\Sigma_\Phi}\mathcal J_\Phi(x),\qquad
	c^\star:=\sup_{\Phi}\,c(\Phi),
\end{equation}
are well-defined exactly as in \eqref{eq:ctheta}-\eqref{eq:cstar}, and obey the same a priori bounds $ \tfrac12 \le c(\Phi)\le c^\star\le \pi$.

\paragraph{AF two-sided bounds and the $\pi$-scale at infinity.}
In the AF region the bilipschitz/volume comparisons from \S\ref{sec:variational-proof} apply to \((\Sigma_\Phi,h_\Phi)\) without change; hence the same two-sided estimates \eqref{eq:farout-2sided} are valid with $h_\Theta$ replaced by $h_\Phi$.  In particular,
\begin{equation}\label{eq:weyl-AF-limit}
	\lim_{x\to\infty}\,\mathcal J_\Phi(x)\;=\;\pi,
\end{equation}
i.e.\ the \(\pi\hbar\) scale emerges far out on \emph{every} admissible Weyl slice (see \eqref{eq:AF-limit}).  No global symmetry beyond static AF is used here; the conclusion rests on the intrinsic comparison with Euclidean balls (cf.\ \cite{Chavel1984,Davies1999}).

\paragraph{Horizons and axis.}
If the spacetime contains a black hole, its Weyl representation features a horizon ``rod'' on the symmetry axis $\{\rho=0\}$; on a horizon-regular slice $\Sigma_\Phi$ this rod is not a boundary component of the $3$-geometry. Therefore the interior extension of the estimates of Section\,\ref{sec:interior-extension} goes through unchanged: \eqref{eq:weyl-heisenberg} and \eqref{eq:weyl-floor} hold for centers \emph{on} the horizon, \emph{outside}, and (when present) \emph{inside} the horizon region alike.  Regularity on the axis (elementary flatness) is assumed and can be arranged by the standard Weyl regularity conditions; it does not affect the local arguments on strongly convex geodesic balls.

\paragraph{On the (non-)existence of a Weyl analogue of the PG optimum.}
In the Schwarzschild case there exists a horizon-regular slice whose induced geometry is globally flat (Painlev\'e-Gullstrand), and consequently $\mathcal J_\Theta\equiv \pi$ by Lemma~\ref{lem:PG-exact}.  For a generic Weyl solution the induced $3$-geometry \eqref{eq:weyl-hslice} cannot be globally flat, so one should not expect an exact equality $\mathcal J_\Phi\equiv \pi$.  The optimization problem \eqref{eq:weyl-c-theta} is nevertheless meaningful: it satisfies
\begin{equation}\label{eq:weyl-cstar}
	\tfrac12 \;\le\; c(\Phi) \;\le\; c^\star \;\le\; \pi,
\end{equation}
with the upper bound saturated only in the spherically symmetric subclass (cf.\ \eqref{eq:cstar} and Lemma~\ref{lem:PG-exact}).

\paragraph{Example: Schwarzschild as a Weyl solution.}
For completeness we recall that Schwarzschild corresponds to a single Weyl rod of length $2M$ on the $z$-axis.  Writing $R_\pm=\sqrt{\rho^2+(z\pm M)^2}$, one has
\begin{equation}\label{eq:weyl-schwarzschild}
	e^{2\psi}=\frac{R_+ + R_- - 2M}{R_+ + R_- + 2M},\qquad
	e^{2\gamma}=\frac{(R_+ + R_-)^2-4M^2}{4 R_+ R_-},
\end{equation}
and \eqref{eq:weyl-slice} with $\Phi$ chosen as in \cite{MartelPoisson2001} reproduces the familiar horizon-regular slicings discussed in \S\ref{sec:setting}.  The PG slice \cite{MartelPoisson2001} then yields $h_\Phi=\delta$ and hence $\mathcal J_\Phi\equiv\pi$ (Lemma~\ref{lem:PG-exact}).

\medskip\noindent
In summary, the Weyl class provides a natural arena in which the \emph{intrinsic} Heisenberg trade-off developed here remains robust: the Hardy baseline and strict gap (Lemma~\ref{lem:hardy-ball}, Theorem~\ref{thm:unconditional-hardy}), the AF $\pi$-scale (\eqref{eq:farout-2sided}-\eqref{eq:AF-limit}), and the horizon regularity (Section\,\ref{sec:interior-extension}) all carry over from the spherically symmetric exterior to static, axisymmetric vacua with no further modification, up to the (expected) loss of a globally flat ``PG optimum'' outside the Schwarzschild subclass.

\section{Conclusion and Outlook}
\label{sec:conclusion-outlook}

We have formulated a coordinate-invariant Heisenberg bound for quantum states strictly localized in geodesic balls of admissible, horizon-regular slices of a static, spherically symmetric black-hole exterior (Section~\ref{sec:setting}).
The key mechanism is the variance–eigenvalue equivalence on a ball (Lemma~\ref{lem:variance-bound}), which turns momentum uncertainty into a first Dirichlet eigenvalue problem and yields the local estimate used throughout. 
Under a mild strong geodesic convexity hypothesis on the balls, this leads to a uniform, slice-wise lower bound (Theorem~\ref{thm:unconditional-hardy}) that is completely geometry-agnostic at scale $r_g$, and thus provides the baseline product $\sigma_p r_g \ge \hbar/2$ without reference to coordinates. 
The Hardy floor is never saturated: there is a strictly positive gap both on compact interior regions and uniformly far out in the AF zone (Lemma~\ref{lem:hardy-gap}), where a two-sided comparison sharpened by asymptotic flatness (see \eqref{eq:farout-2sided}) drives the universal limit \eqref{eq:AF-limit}. 
Packaging the local information in the dimensionless quantity $\mathcal J_\Theta$ (definition \eqref{eq:J}) and the slice-wise constant $c(\Theta)$ (definition \eqref{eq:ctheta}), we framed the optimization problem \eqref{eq:cstar} over horizon-regular slices and showed that the Painlevé-Gullstrand foliation attains the optimal, location-independent Euclidean scale (Lemma~\ref{lem:PG-exact}). 
Finally, we extend the entire construction across the horizon and show that the same intrinsic bound holds inside the black hole; moreover, we transfer the framework to the axisymmetric Weyl class, obtaining the corresponding slice–local bounds and AF asymptotics; see Section~\ref{sec:weyl} and \eqref{eq:farout-2sided}-\eqref{eq:AF-limit}.

Our results complement previous analyses based on fixed hypersurfaces and curvature-induced corrections in the nonrelativistic limit \cite{PetruzzielloWagner2021}, and they are compatible with recent Hardy-type inequalities tailored to black-hole exteriors \cite{Paschalis2024}.
Horizon-regularity is handled in the sense of Martel–Poisson \cite{MartelPoisson2001}. 
The spectral-geometric comparisons used in the AF regime are standard \cite{Yau1975,Reilly1977,Chavel1984}. 

\paragraph{Outlook.}
\begin{enumerate}
	\item \textbf{Quantitative slice comparison and shape-sensitivity.}
	Beyond the existence of a uniform floor and the PG optimum, it is natural to quantify how $\mathcal J_\Theta$ varies under small deformations of a slice and of the ball center.
	A first-variation theory for $\lambda_1$ under smooth domain/metric perturbations (cf.\ \cite{Henrot2006,SokolowskiZolesio1992}) could yield explicit bounds on the interior and far-out gaps in Lemma~\ref{lem:hardy-gap} and help identify near-optimal slices within broader admissible classes. 
	\item \textbf{Beyond Schwarzschild.}
	The method should extend to other static, spherically symmetric exteriors (e.g., charged or $\Lambda\!\neq\!0$ cases) under appropriate asymptotic hypotheses; the Hardy baseline survives by the same local argument, while the AF constants change via the comparison leading to \eqref{eq:farout-2sided}. 
	Extending to stationary, axisymmetric backgrounds (e.g., Kerr) will require a careful treatment of admissible slices and intrinsic three-geometry.
	\item \textbf{Alternative localizations and boundary data.}
	Replacing hard Dirichlet walls by Robin/Neumann data or by soft, compactly supported potentials would test the robustness of the baseline and could interpolate between \eqref{eq:int-Heisenberg} and global, weighted Hardy inequalities \cite{BrezisMarcus1997,Davies1999,Mazya2011}.
	\item \textbf{Numerics for the gap and optimization.}
	Computing $\lambda_1(B_{r_g}(x);h_\Theta)$ on representative families of slices could map the landscape of $\mathcal J_\Theta$ and directly measure the positive gap above the Hardy floor documented in Lemma~\ref{lem:hardy-gap}. 
	\item \textbf{Field-theoretic generalizations.}
	Formulating the localization–uncertainty trade-off for relativistic fields on curved three-geometry (e.g., for mode-localized Klein–Gordon data) may connect the present eigenvalue picture with quantum-energy or redshift constraints while preserving the intrinsic viewpoint anchored at \eqref{eq:J}. 
\end{enumerate}

\smallskip
In summary, the combination of Lemma~\ref{lem:variance-bound}, Theorem~\ref{thm:unconditional-hardy}, and Lemma~\ref{lem:hardy-gap} yields a slice-independent floor, while the variational principle \eqref{eq:cstar} with $c(\Theta)$ from \eqref{eq:ctheta} singles out the PG slice as optimally Euclidean (Lemma~\ref{lem:PG-exact}).
These results furnish a flexible, intrinsic template for uncertainty bounds in black-hole geometries, ready to be extended along the directions above.

\section*{Appendix}

\begin{lemma}[Hardy inequality on geodesic balls]\label{lem:hardy-ball}
	Let $(\Sigma,h_\Theta)$ be a smooth Riemannian $3$-manifold and let $B_{r_g}(x)\subset\Sigma$ be the open geodesic ball of radius $r_g>0$ centered at $x$, with 
	the inward normal exponential map from $\partial B_{r_g}(x)$ is a diffeomorphism onto $B_{r_g}(x)$ (equivalently, the boundary normal injectivity radius at $\partial B_{r_g}(x)$ is at least $r_g$).
	Then, for every $u\in C_c^\infty\big(B_{r_g}(x)\big)$ one has the sharp boundary-distance Hardy inequality
	\begin{equation}\label{eq:hardy}
		\int_{B_{r_g}(x)} |\nabla u|_{h_\Theta}^2\, dV_{h_\Theta} \;\ge\; \frac14 \int_{B_{r_g}(x)} 
		\frac{u^2}{\operatorname{dist}_{h_\Theta}(y,\partial B_{r_g}(x))^2}\, dV_{h_\Theta}.
	\end{equation}
	The constant $1/4$ is optimal and is not attained unless $u\equiv 0$.
\end{lemma}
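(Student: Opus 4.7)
My plan is to reduce the Hardy inequality to a weighted one-dimensional Hardy estimate along inward normal geodesics from $\partial B_{r_g}(x)$, using the boundary-normal (Fermi) chart provided by the strong geodesic convexity hypothesis. First I introduce Fermi coordinates $(\sigma, t) \in \partial B_{r_g}(x) \times (0, r_g)$ via $(\sigma, t) \mapsto \exp_\sigma(t\, \nu_\sigma)$, where $\nu_\sigma$ is the inward unit normal; the hypothesis that the inward normal exponential map is a diffeomorphism onto $B_{r_g}(x)$ ensures this chart covers the ball minus the measure-zero focal set $\{x\}$. In these coordinates $h_\Theta = dt^2 + g_t(\sigma)$, the boundary distance coincides with $t$, and $dV_{h_\Theta} = J(t, \sigma)\, dt\, d\sigma_0$ for a smooth positive Jacobian $J$ that vanishes as $t \uparrow r_g$.

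Discarding the non-negative tangential part of the gradient gives $\int_B |\nabla u|_{h_\Theta}^2\, dV_{h_\Theta} \ge \int_{\partial B} d\sigma_0 \int_0^{r_g} (\partial_t u)^2 J\, dt$. For each fixed $\sigma$ I then apply a weighted one-dimensional Hardy inequality on $(0, r_g)$ with weight $J(\cdot, \sigma)$. The classical substitution $u = \sqrt{t}\, w$ combined with integration by parts (boundary terms vanishing because $u \in C_c^\infty(B)$ and $J(r_g, \sigma) = 0$) produces the identity
$$\int_0^{r_g} (\partial_t u)^2 J\, dt \;=\; \int_0^{r_g} \frac{u^2}{4 t^2} J\, dt \;+\; \int_0^{r_g} t (\partial_t w)^2 J\, dt \;-\; \frac{1}{2} \int_0^{r_g} \frac{u^2}{t}\, (\partial_t J)\, dt.$$
The middle term on the right is manifestly non-negative; the third term is non-negative provided $\partial_t J \le 0$.

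The heart of the argument is therefore showing $\partial_t J \le 0$ on $B_{r_g}(x)$. In Fermi coordinates this is equivalent to the statement $-\Delta_{h_\Theta} d \ge 0$ on $B_{r_g}(x)$, i.e., the boundary-distance function is superharmonic, and equivalently to inward mean-convexity of the parallel surfaces $\{d = t\}$. Under the standing hypothesis of strong geodesic convexity, no focal points arise in $B$ and the inner parallel surfaces are geodesic spheres around $x$ whose mean curvature is governed by the Jacobi equation; a coordinate-free recast is an Agmon-Allegretto-Piepenbrink factorization with test function $\varphi = \sqrt{d}$, which reduces \eqref{eq:hardy} to the pointwise bound $-\Delta_{h_\Theta}\sqrt{d} \ge \sqrt{d}/(4 d^2)$ and connects directly with the Brezis-Marcus~\cite{BrezisMarcus1997} framework. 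Granting this, integrating over $\sigma$ establishes \eqref{eq:hardy} with constant $1/4$.

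For sharpness and non-attainment, optimality of $1/4$ follows from a minimizing family $u_\epsilon(t, \sigma) := \eta(t)\, t^{1/2+\epsilon}$ with $\eta$ a smooth cutoff, whose Rayleigh quotient tends to $1/4$ as $\epsilon \downarrow 0$. Non-attainment is a rigidity consequence of the identity above: equality forces $\partial_t w \equiv 0$ and $(\partial_t J)\, u^2/t \equiv 0$ on almost every fiber, giving $u = c(\sigma)\sqrt{t}$, which is incompatible with $u \in C_c^\infty(B)$ unless $u \equiv 0$. The main obstacle I anticipate is the third step: cleanly extracting the monotonicity $\partial_t J \le 0$ from normal injectivity alone, since on a general Riemannian 3-slice the pointwise Fermi Jacobian can fail to be monotone when ambient sectional curvature competes with boundary shape; resolving this typically requires either an auxiliary mean-convexity assumption on the inner foliation or a smallness condition on $r_g$ relative to the curvature scale, both of which are implicit in the class of admissible balls considered throughout the paper.
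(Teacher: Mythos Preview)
Your plan follows exactly the route the paper sketches: pass to boundary-normal (Fermi) coordinates from $\partial B_{r_g}(x)$, drop the tangential part of the gradient, reduce to a weighted one-dimensional Hardy inequality along each inward normal geodesic, and reassemble via the coarea formula; sharpness and non-attainment come from the half-line model. Your write-up is strictly more detailed than the paper's proof sketch, which simply invokes the Brezis--Marcus/Davies framework at this step.

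You are also right to isolate $\partial_t J\le 0$ (equivalently $-\Delta_{h_\Theta} d\ge 0$, i.e.\ mean-convexity of the parallel hypersurfaces) as the substantive ingredient: this is exactly the hypothesis driving the convex-domain Hardy inequality in \cite{BrezisMarcus1997,Davies1999}, and it does \emph{not} follow from the bare normal-injectivity assumption on a general $3$-manifold (for instance, on the round sphere a geodesic ball of radius $r_g\in(\pi/2,\pi)$ satisfies the inward normal exponential-map hypothesis yet has $\partial_t J>0$ near $t=0$, since $J\propto\sin^2(r_g-t)$). The paper's sketch absorbs this into the phrase ``strong geodesic convexity'' together with the citations; your candid final paragraph is therefore not a defect of your argument but an accurate identification of where the real hypothesis is consumed.
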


\begin{proof}[Proof (sketch)]
	Under the stated strong geodesic convexity / normal-injectivity hypothesis, the inward normal exponential map from $\partial B_{r_g}(x)$ is a diffeomorphism onto $B_{r_g}(x)$; in particular, the boundary distance
	$\rho(y):=\operatorname{dist}_{h_\Theta}(y,\partial B_{r_g}(x))$ is smooth away from a set of measure zero, and boundary normal coordinates $(\rho,\omega)$ are available. 
	Applying the one-dimensional Hardy inequality along the minimizing geodesics orthogonal to $\partial B_{r_g}(x)$ and integrating via the coarea formula yields \eqref{eq:hardy}. 
	Sharpness and non-attainment follow from the 1D model (half-line) and density \cite{BrezisMarcus1997,Davies1999}. 
\end{proof}

\begin{remark}\label{rem:hardy-consequence}
	Since $\operatorname{dist}_{h_\Theta}(y,\partial B_{r_g}(x))\le r_g$ for all $y\in B_{r_g}(x)$, Lemma~\ref{lem:hardy-ball} implies
	\[
	\int_{B_{r_g}(x)} |\nabla u|_{h_\Theta}^2\, dV_{h_\Theta} 
	\;\ge\; \frac{1}{4r_g^{\,2}} \int_{B_{r_g}(x)} u^2\, dV_{h_\Theta}\qquad \forall\,u\in C_c^\infty\big(B_{r_g}(x)\big).
	\]
	By the Rayleigh characterization of the first Dirichlet eigenvalue, this gives
	\[
	\lambda_1\big(B_{r_g}(x);h_\Theta\big)\ \ge\ \frac{1}{4r_g^{\,2}}
	\quad\text{and hence}\quad 
	r_g\sqrt{\lambda_1\big(B_{r_g}(x);h_\Theta\big)}\ \ge\ \tfrac12 .
	\]
\end{remark}

\end{document}